\numberwithin{equation}{section}
\newtheorem{theorem}{Theorem}[section]
\newtheorem{proposition}[theorem]{Proposition}
\newtheorem{lemma}[theorem]{Lemma}
\theoremstyle{definition}
\newtheorem{definition}{Definition}[section]
\newcommand{\E}{\mathbb{E}}
\newcommand{\R}{\mathbb{R}}
\def\E#1{\mathbb{E} \left( #1 \right)}  
\def\P#1{\mathbb{P} \left( #1 \right)} 
\def\g[#1,#2,#3]{\langle #1|\frac{1}{#2} |#3\rangle} 
\DeclareMathOperator{\indfct}{\rm 1}
\DeclareMathOperator{\Tr}{\rm Tr}
\def\be{\begin{equation} }
\def\ee{\end{equation} }
\begin{document}
\date{\small October 4, 2008}

\title{On the Joint Distribution of Energy Levels \\  of Random Schr\"odinger Operators}   

\author{Michael Aizenman and Simone Warzel\footnote{Present address: Zentrum Mathematik, TU M\"unchen, Germany} \\ \small Departments of
Mathematics and Physics,\\ \small Princeton University, Princeton NJ
08544, USA.}
%

\maketitle

\begin{abstract} 
We consider operators with  random potentials on graphs, such as the lattice version of the random Schr\"odinger operator. 
The main result  is  a general bound on the probabilities of simultaneous occurrence of eigenvalues in specified distinct  intervals, with the corresponding eigenfunctions being separately localized within prescribed regions.  The bound generalizes the Wegner estimate on the density of states.   The analysis proceeds through a new multiparameter spectral averaging principle.    \\[4ex] 

\noindent 
{\bf Keywords:} Random operators, level statistics, 
Anderson localization.\\[1ex]
(2000 Mathematics Subject Classification: 47B80, 60K40) 
\footnote {E-mail address: aizenman@princeton.edu, swarzel@princeton.edu}\\
\vspace*{1cm}
\end{abstract}



\newpage

\section{Introduction}

In this note we present some general bounds for the joint 
distribution of eigenfunctions of operators with random potential, in the discrete setting.   After finding that a 
natural multi-level extension of the Wegner bound on the density of states is not generally valid, we present a corrected version which is.   It consists of a   general bound on the probabilities of simultaneous occurrence of eigenvalues in distinct  intervals, with the corresponding eigenfunctions being separately localized within prescribed regions, in a sense made precise below.   The bound is derived through a suitable multi-parameter extension of the spectral averaging principle which is a familiar, and useful, element of the mathematical theory of Anderson  localization.  

Wegner-type bounds are of relevance for the analysis of  the extensions of the Schr\"odinger evolution to non-linear time evolutions and to  interactive extensions of the one-particle model.    
These are not discussed here, but let us note that such systems continue to attract attention, with interesting results presented  in ref.~\cite{AF88,BW07,FKS}  as well as  in a number of works which are currently in progress~\cite{FKS,CS}.

More explicitly, we consider  random operators acting in  the Hilbert space $ \ell^2(\Lambda) $, with $\Lambda$  a finite set,  of the form
\begin{equation} \label{H}  
 H_\Lambda(\omega)  = T + V(\omega)
\end{equation} 
where $ T $ is a fixed hermitian operator 
and the randomness, represented by  $ \omega$, 
enters only through a diagonal matrix $  V(\omega) = {\rm diag}(V_x(\omega))_{x \in \Lambda}  $.  Here $ \omega$  is a variable taking values in a probability space $(\Omega, \mathbb{P})$.   The joint distribution of $\{ V_x(\omega)\}_{x \in \Lambda} $ induces a probability measure on the space of realizations $\R^{|\Lambda|}$.  For convenience, and without loss of generality we identify $\Omega$ with this space, with $\{ V_x(\omega)\}$ given by the natural coordinates.  Expectation values with respect to the probability measure $ \mathbb{P} $ will be denoted by $\mathbb{E}$.  

By default, it will subsequently be assumed here that the joint distribution of the potential variables satisfies the following {\em regularity} condition:
\begin{description}
\item[Assumption R:] For each site $x\in \Lambda$, the conditional probability distribution of  $V_x$, conditioned on 
$ \{ V_y \}_{y\neq x} $, is absolutely continuous with respect to the Lebesgue measure,  and its density (i.e. the corresponding Radon-Nikodym derivative) is uniformly bounded by a constant, $\rho_\infty$.
\end{description}  

Among the general results which are known for such random operators, and  which have already played  useful roles in the mathematical analysis of Anderson localization and of the corresponding spectral statistics, are:
\begin{enumerate} 
\item {\bf Spectral simplicity}:  With probability one $ H_\Lambda(\omega) $ has only simple, i.e., non-degenerate, eigenvalues.  (A proof which does not rely on the more involved Minami estimate is spelled in the appendix.)
\item The {\bf Wegner bound}  \cite{Weg81}:  the mean density of states is bounded by $\rho_\infty$.  Equivalently, for any energy interval   $I$ 
  \begin{equation}  \label{wegner} 
   \mathbb{P}\left\{ \sigma(H_\Lambda) \cap I \neq \emptyset \right\} \ \le \ \mathbb{E}\left[\Tr \, P_{I}(H_\Lambda)\right] \ \le \ \rho_\infty \, |I| \, |\Lambda| \,  ,  
  \end{equation} 
where  $\sigma$ is the spectrum of the operator,  $P_{I} $ is the corresponding spectral projection, and $|\cdot|$ denotes a set's length, or `volume', as appropriate.

\item 
 The   {\bf Minami bound} \cite{Min96}:  the probability of there being multiple eigenvalues in a small energy range satisfies
 \begin{eqnarray} \label{minami} 
    \P{\left\{ \rm{card}\{\sigma(H_\Lambda) \cap I \} \ge 2 \right\} } & \le & 
    \mathbb{E}\left[\Tr P_{I}(H) \, (\Tr P_{I}(H) -1)\right] \nonumber  \\[1ex] 
    & \le  &  \frac{\pi^2}{2} \,  \rho_\infty^2\, |I|^2 \, |\Lambda|^2\,  .  
  \end{eqnarray} 
(The statement had a  one dimensional precursor in \cite{Mol81}.) 
 
\end{enumerate} 
These bounds were recently extended  
\cite{Graf_Vaghi07, BeHiSto07} to: 
 \begin{eqnarray} \label{n-minami} 
     \P{\left\{\rm{card}\{\sigma(H_\Lambda) \cap I \} \ge n \right\}}  \ \le \  
\frac{\pi^n}{n!} \,  \rho_\infty^n \, |I|^n \, |\Lambda|^n\,  .  
  \end{eqnarray} 
Furthermore, in a work which was posted at the time of completion of  this article,   the Minami bound was given a new and more transparent derivation and some further extensions~\cite{CGK08}.  

At first glance, one could ask whether \eqref{n-minami} is a special case of a more general valid  bound on  the $n$-point density functions, of the form: 
 \begin{eqnarray} \label{n-point} 
    \P {\left\{ \sigma(H_\Lambda) \cap I_j \neq \emptyset \quad \mbox{for all $j=1, ..., k$} \right\} } \ \stackrel{??}{\le} \  C_n  \ \rho^n_\infty \, 
\prod_{j=1}^k   |I_j|  \, |\Lambda| \,  ,  
\end{eqnarray} 
where $\{I_j \}$ could be arbitrary intervals.   

A bound like~\eqref{n-point}  could be of use, e.g.,  in estimating the probability that for an a-priori specified energy $E$ there is a of multi-state resonance, in the sense that the quantity $|E-\sum_{j=1}^k m_j E_j|$ is small for some  integer sequences $\{m_j\}$.  Questions of this kind  are of relevance  for   the non-linear extension of the Schr\"odinger evolution  which is studied in~\cite{FKS}.

As it turns out, the bound suggested in \eqref{n-point}  does not hold at the generality of the two preceding statements.  As the simple calculation which is demonstrated next shows,  it is already not valid for the $2\times 2$ example.     
Nevertheless, a somewhat similar bound does hold for disjoint energy intervals under the restriction that the eigenfunctions' moduli are of sufficiently different profiles.  
The precise statement, which is our main result, is presented in section~\ref{sect:thebound}.

\section{A counterexample} \label{counterexample} 

While  at first glance \eqref{n-point} may appear sensible, and even  supported by the observation that for random matrices the level interaction is repulsive, it is easily seen to be false.  A counterexample to \eqref{n-point} is found already in the context of $2\times 2$ matrices.

In the two dimensional space, a self adjoint operator with random potential is given by a $ 2 \times 2 $ self adjoint matrix of the form: 
\begin{equation}
  H(\omega) = \left(\begin{matrix} a+\omega_1 & c \\ c^* & b+\omega_2 \end{matrix}\right) \, ,
\end{equation} 
with some $ a, b \in \mathbb{R} $ and $ c \in \mathbb{C} $. The two eigenvalues of $ H(\omega) $ are  
\begin{equation} \label{E_omega}
  E_{1/2}(\omega) = \frac{1}{2} \left( \omega_1 + \omega_2 + a + b \pm \sqrt{(\omega_1 - \omega_2 + a -b)^2 + 4 |c|^2 } \right) \, , 
\end{equation}
where one may note that the spectral gap satisfies: $ |E_1(\omega) - E_2(\omega)| \geq 2 |c| $ for all $ \omega \in \mathbb{R}^2$.\\ 

The determinant of the change of variables $ (\omega_1,\omega_2) \to (E_1,E_2) $ is given by
\begin{align}
  \left|\det\left(\frac{\partial E_j(\omega)}{\partial \omega_k}\right) \right| 
  & = \frac{\left| \omega_1 - \omega_2 + a -b \right|}{\sqrt{(\omega_1 - \omega_2 + a -b)^2 + 4 |c|^2 } }\  = \  
  \frac{\sqrt{(E_1 - E_2 )^2 - 4 |c|^2 }}
  {\left| E_1 - E_2 \right|}
\end{align}
with $j,k \in \{1,2\}$.  
Hence, for $ \omega_1$ and $\omega_2 $ a pair of iid variables with  a common density $ \varrho $, the probability density for the pair of eigenvalues $ p(E_1,E_2) $ 
 (with $ E_{1} \neq E_2$) is:
\begin{eqnarray} \label{counterexample}
  p(E_1,E_2) & =&    \left|\det\left(\frac{\partial E_j(\omega)}{\partial \omega_k}\right) \right|^{-1}  \, \varrho(\omega_j(E_1,E_2))       \\ 
  & =&  \left\{ \begin{array}{cl} \displaystyle \frac{|E_1 - E_2|}{ \sqrt{( E_1 - E_2)^2 - 4 |c|^2} } \, 
    \prod_{j=1}^2 \varrho(\omega_j(E_1,E_2)) \, ,  
    &  | E_1 - E_2| > 2 |c| \\
  0 \, , &  | E_1 - E_2| \leq 2 |c|, \end{array} \right.   \nonumber 
\end{eqnarray} 
with $ \omega_j(E_1,E_2) $ determined by  the relation \eqref{E_omega}.   

The above density has the singularity of $(|E_1-E_2|-2|c|)^{-1/2}$ at the edge of the spectral gap, where $ |E_1-E_2|=2|c|$.    In effect, one could see here that the level repulsion  is associated with a `pile - up' of the probability density at the edge of the gap which it creates.   

Clearly, similar singularities in the two point function would be found in the more general   $n\times n$ situation whenever the systems is decomposable with an isolated two-site subsystem.  However, it seems to be an interesting question whether  for generic $n\times n$ matrices with random potential the singularity is rounded off due to the larger number of random variables.  

One may also note that while the above calculation contradicts \eqref{n-point},  it implies that at least in the $2\times 2$ case the two point function satisfies a modified bound, which is  be obtained by replacing $|I_j|$, on the right side of \eqref{n-point},  with $\max\{|I_j|^{1/2}, |I_j|\}$.   It will be of interest to clarify how far can such a bound be extended.   Suitable generalizations could provide useful information on the probabilities of multi-level resonances which are mentioned above.  

\section {Bounds on the joint distribution of the  eigenvalues}  
\label{sect:thebound}

\subsection{Statement of the main result} 

The `positive results' which are presented here  amount to bounds on the probabilities of simultaneous occurrences of eigenvalues, in prescribed intervals, which are associated with eigenfunctions of sufficiently distinct profiles.   Following is the definition of that concept.

\begin{definition}\label{def:profile} 
Normalized functions $  \psi_1, \dots , \psi_n \in \ell^2(\Lambda) $, with $\| \psi_j \|=1$, 
 are  said to have $\alpha$-distinct profiles, for some $\alpha >0$, 
within sets  $ B_1, \dots , B_n  \subseteq \Lambda $ if and only if  
\begin{equation}\label{eq:detdef}
  \sum_{x_1 \in B_1} \dots \sum_{x_n \in B_n}  
  \left| \det \left( |\psi_j(x_k)|^2 \right)_{j,k=1}^n  \right| 
  \geq \alpha^n \, . 
\end{equation} 
\end{definition}  

It may be noted that by the linearity of the determinant and the triangle inequality:  
\begin{equation} \label{comparison}
 \sum_{x_1 \in B_1} \dots \sum_{x_n \in B_n}  
  \left| \det \left( |\psi_j(x_k)|^2 \right)_{j,k=1}^n  \right| \geq 
\left| \det \left( \langle \psi_j , \indfct_{B_k} \psi_j\rangle \right)_{j,k=1}^n  \right| \, , 
\end{equation}
where $ \indfct_{D} $ stands for the indicator function of the set $ D $.  
Hence, a sufficient condition for~\eqref{eq:detdef} is that the
row (or column) vectors in the (substochastic) matrix of occupation probabilities 
$ \left( \langle \psi_j , \indfct_{B_k} \psi_j\rangle \right)_{j,k=1}^n $
span a parallelepiped of volume at least $ \alpha^n $. \\

We shall now consider the events: 
\begin{align} \label{event}  
& \mathcal{E}_\alpha(I_1. \dots, I_n; B_1, \dots , B_n) \    \\  
 & := \left\{ \omega \, \Big| \, \begin{array}{l}
  \mbox{\rm $ H_\Lambda(\omega) $ has eigenvalues $ E_1 \in I_1 , \dots, E_n \in I_n $ whose  eigen- } \\
    \mbox{\rm functions have  
    $\alpha$-distinct profiles,  within sets  $ B_1, \dots , B_n $,}
  \end{array}\right\}  \nonumber
\end{align}
with $ I_1, \dots, I_n  \subseteq \mathbb{R} $ a collection of Borel sets, and 
$  B_1, \dots , B_n  \subseteq \Lambda $.   \\    

Proven below is the following statement.  

\begin{theorem} \label{main} 
For operators with random potential, as in \eqref{H}, whose probability distribution satisfies the regularity assumption  {\bf R}, the probabilities of the events defined in \eqref{event} satisfy: 
\begin{equation}\label{eq:thm1}
\P{\mathcal{E}_\alpha(I_1. \dots, I_n; B_1, \dots , B_n)  }
  \  \leq \  \frac{ n!}{\alpha^n}  \,  \rho_\infty^n \,  \prod_{j=1}^n |I_j| \, |B_j|   \, .
\end{equation} 
\end{theorem}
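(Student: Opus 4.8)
The plan is to reduce the bound on $\P{\mathcal{E}_\alpha}$ to a multiparameter spectral averaging estimate, exploiting the observation from first-order (Feynman--Hellmann) perturbation theory that for a non-degenerate eigenvalue $E_i$ with normalized eigenfunction $\psi_i$ one has $\partial E_i/\partial V_x = |\psi_i(x)|^2$. Hence the determinant appearing in Definition~\ref{def:profile} is exactly a Jacobian of the map $(V_{x_1},\dots,V_{x_n})\mapsto(E_{i_1},\dots,E_{i_n})$, the very quantity that surfaced in the $2\times2$ computation. Since the spectrum is a.s. simple, I may label the eigenvalues $E_1,\dots,E_N$ and their eigenfunctions unambiguously. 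First I would bound the indicator of $\mathcal{E}_\alpha$: on the event there is a tuple of distinct eigenvalues, one in each $I_j$, whose eigenfunctions have $\alpha$-distinct profiles, so the defining sum over $x_k\in B_k$ of $|\det(|\psi_{i_j}(x_k)|^2)|$ is at least $\alpha^n$. A repeated column forces the determinant to vanish, so only distinct sites contribute, and I obtain the pointwise majorization
\[
\mathbf{1}_{\mathcal{E}_\alpha}\le\frac{1}{\alpha^n}\sum_{x_1\in B_1}\cdots\sum_{x_n\in B_n}\ \sum_{(i_1,\dots,i_n)}\ \prod_{j=1}^n\mathbf{1}[E_{i_j}\in I_j]\,\Big|\det\big(|\psi_{i_j}(x_k)|^2\big)_{j,k=1}^n\Big|,
\]
the inner sum running over injective index tuples.

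Taking expectations then reduces the theorem to the key estimate, for each fixed distinct $x_1,\dots,x_n$,
\[
\mathbb{E}\Big[\sum_{(i)}\prod_j\mathbf{1}[E_{i_j}\in I_j]\,|\det(\partial E_{i_j}/\partial V_{x_k})|\Big]\le n!\,\rho_\infty^n\prod_j|I_j|,
\]
since summing its left side over $x_k\in B_k$ contributes the factor $\prod_j|B_j|$ and dividing by $\alpha^n$ reproduces \eqref{eq:thm1} exactly. To prove this estimate I would condition on $\{V_y\}_{y\notin\{x_1,\dots,x_n\}}$ and average over $v=(V_{x_1},\dots,V_{x_n})$. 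Assumption~{\bf R}, applied through the chain rule together with the elementary fact that marginalizing cannot raise a conditional density bound (indeed $p(a\mid b)=\int p(a\mid b,c)\,p(c\mid b)\,dc\le\rho_\infty$ when $p(a\mid b,c)\le\rho_\infty$), shows the joint conditional density of $v$ is $\le\rho_\infty^n$; as the integrand is nonnegative this replaces the average by $\rho_\infty^n\int_{\mathbb{R}^n}(\cdots)\,dv$.

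For each fixed tuple $(i)$ the map $\Phi_{(i)}:v\mapsto(E_{i_1}(v),\dots,E_{i_n}(v))$ is Lipschitz with Jacobian $\det(\partial E_{i_j}/\partial V_{x_k})$, so by the area formula $\int_{\mathbb{R}^n}\mathbf{1}[\Phi_{(i)}(v)\in\prod_jI_j]\,|J_{(i)}(v)|\,dv=\int_{\prod_jI_j}\#\Phi_{(i)}^{-1}(E)\,dE$. Summing over $(i)$ and invoking simplicity of the spectrum, for a.e.\ target $E=(E_1,\dots,E_n)$ with distinct entries the assignment $j\mapsto i_j$ is forced (each $E_j$ is an eigenvalue of at most one index), so $\sum_{(i)}\#\Phi_{(i)}^{-1}(E)$ counts precisely the configurations $v\in\mathbb{R}^n$ for which \emph{all} of $E_1,\dots,E_n$ lie in $\sigma(H(v))$. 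The crux, and the step I expect to be the main obstacle, is to bound this count by $n!$: here I would use that $V_{x_k}$ enters $H(v)$ only in the single diagonal entry at $x_k$, so each characteristic equation $\det(H(v)-E_j)=0$ is \emph{multi-affine} (degree one in each $v_k$). The resulting system of $n$ multi-affine equations in the $n$ unknowns $v_1,\dots,v_n$ has, for generic $E$, at most $n!$ solutions by the multihomogeneous B\'ezout bound---the coefficient of $v_1\cdots v_n$ in $(v_1+\cdots+v_n)^n$---yielding $\sum_{(i)}\#\Phi_{(i)}^{-1}(E)\le n!$ for a.e.\ $E$ and hence the key estimate. The remaining delicate points, which I would dispatch by measure-zero arguments, are the exceptional sets of degenerate spectra, coincident targets, and positive-dimensional solution varieties, none of which affect the integrals. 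Assembling these pieces gives~\eqref{eq:thm1}.
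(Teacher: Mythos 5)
Your proposal is correct and follows essentially the same route as the paper's: the Feynman--Hellmann formula identifies the profile determinant with the Jacobian of $V_\Sigma \mapsto \boldsymbol{E}$, conditioning on $\{V_y\}_{y\notin\Sigma}$ plus Assumption {\bf R} bounds the conditional density by $\rho_\infty^n$, and the multi-affine B\'ezout/Bernstein bound supplies the $n!$ multiplicity factor (the paper's Proposition~\ref{lem:bezout}), exactly mirroring the paper's Lemma~\ref{lem:specav}. The only real difference is bookkeeping: you handle several eigenvalues per interval by summing over injective index tuples and invoking the Lipschitz area formula, whereas the paper first restricts to the event $\mathcal{J}$ of exactly one eigenvalue per interval and removes that restriction by an $\varepsilon$-partition and almost-sure simplicity; in both versions the ``positive-dimensional variety'' worry is settled the same way the paper does it, by counting only solutions with nonvanishing Jacobian (which cost nothing in the integral and are isolated by the second part of Proposition~\ref{lem:bezout}).
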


Concerning the uses of this result, it may be noted that throughout the localization regime, where the eigenfunctions are each localized in some region of -roughly- the localization length,  condition~\eqref{eq:detdef} would be satisfied by eigenfunctions of separate supports.   Theorem~\ref{main}  can therefore by used to bound the probabilities of eigenfunctions in prescribed intervals whose eigenfunctions do not overlap in space.  Regrettably, the results presented here do not address  the corresponding question for eigenvalues with overlapping eigenfunctions.  One may wonder whether  even in that case the eigenfunctions' profiles  should typically be distinguishable, in the sense of~\eqref{eq:detdef}.  Such a result could extend the applicability of Theorem~\ref{main}.  

\subsection{Multiparameter spectral averaging}

To prove Theorem~\ref{main}, we first establish the following estimate.  

\begin{lemma}[Generalized spectral averaging]\label{lem:specav}
For operators with random potential, as in \eqref{H}, whose probability distribution satisfies the assumption {\bf R}, 
and any collection of intervals $ I_1 , \dots , I_n $ and sites $ x_1, \dots , x_n \in \Lambda $:
\begin{equation}\label{eq:specav}
  \mathbb{E}\left[ \left| \det\left( \left\langle \delta_{x_k} , P_{I_j}(H_\Lambda) \, \delta_{x_k} \right\rangle \right)_{j,k=1}^n \right| \right] 
  \leq n! \; \varrho_\infty^n \, \prod_{j=1}^n |I_j| \, . 
\end{equation}
\end{lemma}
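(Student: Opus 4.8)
The plan is to turn the expectation into a purely deterministic multiparameter integral over the potential values at the $n$ sites, and then to evaluate that integral by a change of variables to the eigenvalues. If two of the sites coincide the matrix has two equal columns and the determinant vanishes, so I may assume $x_1,\dots,x_n$ are distinct. Writing $\lambda_k:=V_{x_k}$ and freezing all remaining potential values (which fix a self-adjoint background $H_0$), the operator becomes $H(\vec\lambda)=H_0+\sum_{k=1}^n\lambda_k\,|\delta_{x_k}\rangle\langle\delta_{x_k}|$. By Assumption \textbf{R} the conditional density of $(\lambda_1,\dots,\lambda_n)$ given the frozen variables is bounded by $\rho_\infty^n$ (peel off one coordinate at a time, each factor being a conditional density bounded by $\rho_\infty$), and since the integrand is nonnegative it suffices to prove the deterministic bound
\[
\int_{\mathbb R^n}\Big|\det\big(\langle\delta_{x_k},P_{I_j}(H(\vec\lambda))\,\delta_{x_k}\rangle\big)_{j,k=1}^n\Big|\,d\lambda_1\cdots d\lambda_n\ \le\ n!\,\prod_{j=1}^n|I_j|,
\]
uniformly in $H_0$.

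Next I would insert the spectral representation. With $\{E_i(\vec\lambda)\}$ and $\{\psi_i(\vec\lambda)\}$ the (a.s.\ simple) eigenpairs of $H(\vec\lambda)$ one has $\langle\delta_{x_k},P_{I_j}(H)\delta_{x_k}\rangle=\sum_{i:\,E_i\in I_j}|\psi_i(x_k)|^2$, so expanding the determinant by multilinearity in its rows gives
\[
\det\big(\langle\delta_{x_k},P_{I_j}\delta_{x_k}\rangle\big)=\sum_{i_1,\dots,i_n}\Big(\prod_{j=1}^n\indfct[E_{i_j}\in I_j]\Big)\,\det\big(|\psi_{i_j}(x_k)|^2\big)_{j,k=1}^n,
\]
where the nonzero terms have $i_1,\dots,i_n$ distinct. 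The crucial observation is that, by Hellmann--Feynman, $\partial E_i/\partial\lambda_k=|\langle\delta_{x_k},\psi_i\rangle|^2=|\psi_i(x_k)|^2$, so the last determinant is exactly the Jacobian of the map $\Phi_{i_\bullet}\colon\vec\lambda\mapsto(E_{i_1}(\vec\lambda),\dots,E_{i_n}(\vec\lambda))$. (Here I rely on the a.s.\ spectral simplicity recorded in the appendix, which makes the eigenpairs piecewise smooth away from a null set of crossings.)

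Applying the triangle inequality to pull the absolute value inside the sum, and then the area (change-of-variables) formula to each term, the deterministic integral is bounded by
\[
\sum_{i_\bullet}\int_{\mathbb R^n}|\det D\Phi_{i_\bullet}|\,\indfct[\Phi_{i_\bullet}(\vec\lambda)\in Q]\,d\vec\lambda=\int_Q\Big(\sum_{i_\bullet}\#\{\vec\lambda:\Phi_{i_\bullet}(\vec\lambda)=\vec E\}\Big)\,d\vec E,
\]
where $Q:=I_1\times\cdots\times I_n$. For almost every target $\vec E=(E^{(1)},\dots,E^{(n)})$ with distinct entries, simplicity makes the assignment of indices to values unique, so the bracketed count equals the number of $\vec\lambda\in\mathbb R^n$ for which every $E^{(j)}$ is an eigenvalue of $H(\vec\lambda)$.

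The heart of the argument, and the step I expect to be the main obstacle, is to bound this number by $n!$. Each condition ``$E^{(j)}\in\sigma(H(\vec\lambda))$'' reads $\det\big(H(\vec\lambda)-E^{(j)}\big)=0$; since $\lambda_k$ enters $H(\vec\lambda)$ only in the diagonal entry $(x_k,x_k)$, this determinant is an affine--multilinear polynomial in $(\lambda_1,\dots,\lambda_n)$, of degree at most one in each variable. I thus confront a system of $n$ multilinear equations in $n$ unknowns, whose number of isolated solutions is controlled by the multihomogeneous B\'ezout number, i.e.\ the permanent of the all-ones degree matrix, which equals $n!$. The delicate work is to make this rigorous: to justify via Sard's theorem that for a.e.\ $\vec E\in Q$ the system is zero-dimensional and avoids $\sigma(H_0)$, and to invoke the permanent bound cleanly on $\mathbb R^n$. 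Granting this, the displayed integral is at most $\int_Q n!\,d\vec E=n!\,\prod_{j=1}^n|I_j|$, which is the deterministic bound and hence, after restoring the $\rho_\infty^n$ factor from the first paragraph, the lemma.
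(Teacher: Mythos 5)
Your proposal follows the paper's own strategy in all essentials: condition on the potential outside $\Sigma=\{x_1,\dots,x_n\}$ to extract $\rho_\infty^n$, identify $\det\left(|\psi_{i_j}(x_k)|^2\right)$ with the Jacobian of $\vec\lambda\mapsto\boldsymbol{E}$ via Feynman--Hellmann, change variables, and bound the fiber count by $n!$ using the multilinear B\'ezout/Bernstein bound (Proposition~\ref{lem:bezout}). But there is a genuine gap in your reduction, and the paper's proof is structured precisely to avoid it. You freeze $V_{\Lambda\setminus\Sigma}$ \emph{first}, claim a deterministic bound ``uniformly in $H_0$,'' and then invoke the almost-sure simplicity of Lemma~\ref{lem:simplicity} to get smooth eigenpairs ``away from a null set of crossings.'' That lemma, however, requires \emph{all} potential values to be random; once the background is fixed, the crossing set need not be null. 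Concretely, if $H_0$ has a degenerate eigenvalue whose eigenspace is orthogonal to the cyclic subspace generated by $\delta_{x_1},\dots,\delta_{x_n}$ (e.g.\ a frozen block decoupled from the sites $x_k$), then $H(\vec\lambda)$ has a degenerate eigenvalue for \emph{every} $\vec\lambda\in\R^n$. Your per-branch Hellmann--Feynman formula, the rank-one spectral expansion, and the uniqueness of the index assignment all fail or need separate justification on such backgrounds. The paper performs the two steps in the opposite order: the partition/limiting argument (which is where Lemma~\ref{lem:simplicity} is used, legitimately, under the full randomness) reduces the lemma to the bound \eqref{eq:Detbound} on the event $\mathcal{J}$ of \eqref{eq:single}, which by definition carries exactly one, simple, eigenvalue per interval; only then does one condition on $V_{\Lambda\setminus\Sigma}$, and a simple isolated eigenvalue is analytic in $V_\Sigma$ no matter what degeneracies occur elsewhere in the spectrum.

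The same issue infects your counting step. The quantity you propose to bound by $n!$ --- ``the number of $\vec\lambda$ for which every $E^{(j)}$ is an eigenvalue of $H(\vec\lambda)$'' --- can be infinite: in the decoupled example, if $E^{(1)}$ equals the frozen degenerate eigenvalue the first equation of the system holds identically and the solution variety is positive-dimensional, while Bernstein's theorem bounds only \emph{isolated} solutions. You do flag this (``the delicate work \dots\ via Sard's theorem'') but leave it granted, and it is really the crux. The paper needs no Sard-type genericity in $\boldsymbol{E}$: the multiplicity $N(\boldsymbol{E};\Sigma)$ is defined from the outset as the number of solutions with $D(\boldsymbol{E};\Sigma)\neq 0$ (cf.\ \eqref{eq:posdet}), and each such solution is isolated because its Jacobian factorizes as $D$ times a product of spectral gaps, which is nonzero on $\mathcal{J}$, so condition \eqref{eq:detcond} holds and Proposition~\ref{lem:bezout} gives \eqref{n!} directly. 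Your argument can be repaired --- restrict the area-formula fiber count to non-critical preimages (critical points contribute nothing to the integral) and establish simplicity of the relevant eigenvalues before conditioning --- but carrying this out essentially reproduces the paper's two-step proof rather than shortcutting it.
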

The term by which we refer to this statement is motivated by the observation that the case $n=1$ yields the known spectral averaging principle:  
\begin{equation}
  \mathbb{E}\left[ \left\langle \delta_{x} , P_{I}(H_\Lambda) \, \delta_{x}  \right\rangle \right] 
  \leq \varrho_\infty \, |I| \, ,
\end{equation}
from  which the Wegner estimate \eqref{wegner} readily follows.  
The proof of the more general statement is based on an elementary change of variable calculation, combined with input from algebraic-geometry.  The latter is needed  for a bound (which is rather natural) on a relevant multiplicity factor.

\begin{proof}[Proof of Lemma~\ref{lem:specav}]
We shall first derive \eqref{eq:specav} under the additional restriction  to the event: 
\begin{equation}\label{eq:single}
   \mathcal{J}(I_1, \dots , I_n) := \left\{\omega \, \Big| \, \begin{array}{l} 
  \mbox{The spectrum of $ H_\Lambda(\omega) $ includes exactly one }\\
    \mbox{eigenvalue in each of the intervals $ I_1,.., I_n $} \end{array} \right\} \, .
\end{equation}
For $\omega \in  \mathcal{J}(I_1, \dots , I_n)$  the determinant in the left side of \eqref{eq:specav} reduces to 
\begin{equation}
  D\big(\boldsymbol{E}; \Sigma \big) := \det \mbox{}_{n\times n} (|\psi_j(x_k)|^2)  \, , \qquad \Sigma:=\{x_1 , \dots , x_n\} \, ,
\end{equation} 
where  $\boldsymbol{E} := (E_1, \dots. E_n ) $ is the set of eigenvalues which occur in the indicated intervals, and $ \psi_j$ are the normalized eigenvectors  of $ H_\Lambda(\omega) $ corresponding to the (uniquely defined)  
eigenvalues $ E_j \in I_{j} $.  Thus, our first goal is to establish the bound
\begin{equation}\label{eq:Detbound}
  \mathbb{E}\left[    \; \left|D\big(\boldsymbol{E}; \Sigma\big)  \right|  \; \indfct_{\mathcal{J}_{ I_1,\dots , I_n  } }  \,  \right] 
  \  \leq \  n! \; \varrho_\infty^n \,\prod_{j=1}^n  |I_j| \, ,
\end{equation} 
where $\indfct_{\mathcal{J} } \equiv \indfct_{\mathcal{J} } (\omega)$ denotes the indicator function of  the event $\mathcal{J}  \equiv \mathcal{J}(I_1, \dots , I_n)$.

The expectation value  in \eqref{eq:Detbound}  can be calculated as the average of  the conditional expectation of the same quantity conditioned on $ V_{\Lambda \backslash \Sigma} := \{ V_{x}\}_{x \not\in \Sigma} $, i.e. 
\begin{equation}\label{eq:ac1}
\mathbb{E}\left[  \left|D\big(\boldsymbol{E}; \Sigma\big)  \right|  \indfct_{\mathcal{J}} \right] \ = \     
\int_{\R^{|\Lambda\backslash \Sigma|}} 
\left[ 
\int_{\R^{|\Sigma|}} 
\indfct_{\mathcal{J} }  \; \left|D\big(\boldsymbol{E}; \Sigma\big)  \right|  \  
\mu(d V_{ \Sigma} \, |\, V_{\Lambda\backslash \Sigma})   
\right] \mu(d V_{\Lambda\backslash \Sigma})  \, .
\end{equation}
Thus, under the assumption ${\bf R}$ on the joint distribution of $\{ V_x\}$,  we have: 
 \begin{equation}\label{eq:ac}
\mathbb{E}\left[  \left|D\big(\boldsymbol{E}; \Sigma\big)  \right|  \indfct_{\mathcal{J}} \right]  \  \leq   \ 
\varrho_\infty^n  \, \sup_{ V_{\Lambda \backslash \Sigma} }
      \int_{\mathcal{S} }  \left|  D\big(\boldsymbol{E}; \Sigma\big)  \right|  \, dV_\Sigma  \, . 
\end{equation}                        
where  ${\mathcal{S} } $ is the following subset of  the section of $ \mathcal{J} $ at the specified $V_{\Lambda\backslash \Sigma}$: 
\begin{equation}\label{eq:posdet}
  \mathcal{S} := \left\{ V_\Sigma \, \Big| \, V \equiv (V_\Sigma, V_{\Lambda\backslash \Sigma}) \in \mathcal{J}  \; \mbox{and} \; D\big(\boldsymbol{E}; \Sigma\big) \neq 0 
  \,\right\} .
\end{equation} 
The integral on the right side of \eqref{eq:ac} may be conveniently expressed through the change of  variables 
\begin{equation}\label{eq:coordtrafo}
V_\Sigma := ( V_{x_1} , \dots ,  V_{x_n} ) \longrightarrow \boldsymbol{E} := (E_1, \dots , E_n ) \, , 
\end{equation} 
which is to be understood as performed  at fixed $V_{\Lambda\backslash \Sigma}$. 
Standard perturbation theory \cite{Kato66} implies that the set  
$ \mathcal{J} \subset \R^{|\Lambda|}$ 
is covered by open sets within each of which $E_j$, 
are defined as single-valued analytic functions of   $V_\Sigma $, with derivatives given by the Feynman-Hellmann formula: 
\begin{equation}  \label{FH} 
\frac{\partial E_{j}}{\partial V_{x_k}} = \left| \psi_{j}(x_k)\right|^2 \, . 
\end{equation}
Hence, the Jacobian  for the coordinate change is given by 
\begin{equation} 
\det \left(  \frac{\partial \{E_1, .., E_n \}} 
{\partial \{V_{x_1}, .., V_{x_n} \}} \right) \ = \ D\big(\boldsymbol{E}; \Sigma\big) 
\end{equation}
The section $ \mathcal{S} $ is covered by open sets on which 
$  D\big(\boldsymbol{E}; \Sigma\big) \neq 0 $, for which the transformation \eqref{eq:coordtrafo} is locally bijective.  Globally, the mapping is not $1 - 1$, and     
the correct change of variables formula  is:
\begin{equation}
  \int_{\mathcal{S} }  \left|  D\big(\boldsymbol{E}; \Sigma\big)  \right|  \, dV_\Sigma 
  = \int_{I_1 \times \dots \times I_n} \mkern-30mu N(\boldsymbol{E}; \Sigma) \; d\boldsymbol{E} \, .
\end{equation}
with the multiplicity factor: 
\begin{equation}
  N(\boldsymbol{E}; \Sigma) := \rm{card} \left\{ V_\Sigma \, | \, \mbox{$\boldsymbol{E}$ are eigenvalues of $ H_\Lambda(V_\Sigma, V_{\Lambda\backslash \Sigma})  $ and  $ D\big(\boldsymbol{E}; \Sigma\big) \neq 0 $} \right\}  \, .
\end{equation}

The factor $ N(\boldsymbol{E}; \Sigma) $ counts the number of simultaneous solutions, for $V_{\Sigma}$, of the set of equations (at fixed $V_{\Lambda\backslash \Sigma} $): 
\begin{equation} \label{system}
P_{E_j}(V_{x_1}, \dots, V_{x_n}) \ = \  0  \, , \qquad \mbox{$j=1,..., n$} \, , 
\end{equation}
where $P_{E}(V_{x_1}, \dots, V_{x_n}) $ is the characteristic polynomial: 
\begin{equation}\label{eq:charpol}
  P_{E}(V_{x_1}, \dots, V_{x_n}) \  \equiv \   
  P(V_{x_1}, \dots, V_{x_n}; E) \ :=\  \det\left( H_\Lambda - E \right) 
\end{equation} 
The number of solutions of a system of algebraic equations is a classical problem of algebraic geometry (for which it is the size of a zero dimensional algebraic variety, defined by \eqref{system}).  A rather general answer is provided by  so-called  Bezout's theory.   However, a simplifying observation is that in the case of interest for us the polynomials 
$P_{E}(V_{x_1}, \dots, V_{x_n}) $   are linear in each of the $V_{x_j}$ variables.   
To form a guess as to what may be the number of solutions for such systems, one may observe that the answer is trivial  
if the non-random term  in $H_{\Lambda}(\omega) = T+V(\omega)$  is a diagonal matrix $\rm{diag}\{T_{x_1}, ... ,  T_{x_n} \}$.  In this case, equations \eqref{system} are simultaneously satisfied if and only if  $\{T_{x_j}+ V_{x_j}\}_{j=1,...,n}$ coincide with a permutation of $\{E_j\}_{j=1,...,n}$, and thus  $N(\boldsymbol{E}; \Sigma)  \leq n!$.

As it turns out, by a theorem due to D. Bernstein\footnote{We thank J. Koll\'ar for help with the reference.} also in the more general case which is of interest to us  the number of solutions of the system  \eqref{system} satisfies:       
\begin{equation}  \label{n!}
 N(\boldsymbol{E}; \Sigma)   \  \le \  n!  \quad   .
\end{equation} 
The applicable theorem  is Proposition~\ref{lem:bezout} which is presented in Appendix~\ref{App:counting}.   To apply it, we need to check that for the counted solutions $\det\left(  \frac{\partial P_{E_j}}{\partial V_{x_k} }   \right)   \neq 0$.    
For that we note: 
\begin{equation}
  \frac{\partial P(V_\Sigma; E_j)}{\partial V_{x_k} } 
  \ = \ \frac{ \partial \mbox{  }} {\partial V_{x_k}} {\big |_{E=E_j}}
 \prod_m \left [ E_m(V_\Sigma)-E \right ]  
  =   \frac{\partial E_{j}}{\partial V_{x_k}}  \, \prod_{\substack{E_{m} \in \sigma(H_\Lambda) \\ m \neq j}} ( E_m - E_j) \, .  
\end{equation} \\  
Since the last product is non-zero for  all $ V \in \mathcal{J} $ and $ j = 1 , \dots , n $, condition \eqref{eq:detcond}
is satisfied on $ \mathcal{S} $.\\


The above considerations prove   \eqref{eq:Detbound}, which differs from \eqref{eq:specav} mainly in the presence of the additional constraint that each interval $I_j$ includes exactly one eigenvalue of $H_\Lambda$. 
We shall now show that \eqref{eq:specav} follows. 
For that, consider a  partition of $\cup_j I_j $ into a finite collection, ${\mathcal C_\varepsilon}$, of disjoint sub intervals whose length does not exceed $\varepsilon $.   One may represent each of the intervals  $I_j$ as a disjoint union  $I_j = \cup_{m=1}^{\ell_j(\varepsilon)}  I_j^{m}$ of elements  of such a partition, i.e. with each $ \{I_j^{m}\}$  in ${\mathcal C_\varepsilon}$  (if the sets $I_j$ are not pairwise disjoint some elements of ${\mathcal C_\varepsilon}$ will be called upon more than once.) 
%
By the linearity of the determinant: 
\begin{equation} \label{sum1}
 \det\left( \left\langle \delta_{x_k} , P_{I_j}(H_\Lambda)\,  \delta_{x_k} \right\rangle \right) = 
 \sum_{m_1=1}^{\ell_1(\varepsilon)}\  \dots \  \sum_{m_n=1}^{\ell_n(\varepsilon)}
  \det\left( \left\langle \delta_{x_k} , P_{ I_{j}^{m_j}}(H_\Lambda) \, \delta_{x_k} \right\rangle \right) \, ,
\end{equation}
The sum can be restricted to the case that $ I_{1}^{_{m_1}} , \dots , I_{n}^{_{m_n}} $ are disjoint, since otherwise the determinant vanishes. 

Using the fact that the determinant on the left is  bounded by one, we can estimate: 
\begin{multline}
  \mathbb{E}\left[ \left|\det\left( \left\langle \delta_{x_k} , P_{I_j}(H_\Lambda)\,  \delta_{x_k} \right\rangle \right)\right| \right]
   \leq   \mathbb{P}\left\{\begin{array}{l} \mbox{In at least one of the elements of ${\mathcal C_\varepsilon}$ }\\ 
  \mbox {$H_\Lambda$ has two or more eigenvalues} \end{array}\right\} \\
  + \sum_{m_1, \dots , m_n} \mathbb{E}\left[\left|\det\left(\left\langle \delta_{x_k} , P_{ I_{j}^{m_j}}(H_\Lambda) \, \delta_{x_k}\right) \right\rangle\right| \; 
    \indfct_{\mathcal{J}(I_{1}^{_{m_1}} , \dots , I_{n}^{_{m_n}})} \right]\, . 
\end{multline}
where the summation range is as in \eqref{sum1}.   We shall now take the limit  $ \varepsilon   \to 0 $.  Since the eigenvalues of $ H_\Lambda $ 
are almost surely simple (Lemma~\ref{lem:simplicity}), 
the dominated convergence theorem implies that 
in  that limit the first term vanishes .  
Applying  \eqref{eq:Detbound} to the remaining terms one gets: 
\begin{align} \label{sum2}
 \E{|\det\left( \left\langle \delta_{x_k} , P_{I_j}(H_\Lambda)\,  \delta_{x_k} \right\rangle \right) |}   \ \le \ &
 \lim_{\varepsilon \to 0}  
 \sum_{m_1=1}^{\ell_1(\varepsilon)}\  \dots \  \sum_{m_n=1}^{\ell_n(\varepsilon)}
 n! \, \rho_\infty^n \prod_j |I_j^{m_j} | \notag \\ \equiv \ & n! \, \rho_\infty^n \prod_j |I_j |
\end{align}
which yields  \eqref{eq:specav}. 
  \end{proof}


\subsection{Proof of main result}     
    
\begin{proof}[Proof of Theorem~\ref{main}]
Using the limiting argument employed at the end of the proof of Lemma~\ref{lem:specav}, one shows that it is sufficient to bound the probability of 
$ \mathcal{E}_\alpha\cap  \mathcal{J} $, where $ \mathcal{J} $ was defined in \eqref{eq:single}. Using the assumption
on the normalized eigenfunctions $ \psi_1, \dots , \psi_n $ corresponding to $ E_1 \in I_1 , \dots , E_n \in I_n $, we then estimate:
\begin{align}
& \mathbb{P}\left(\mathcal{E}_\alpha(I_1. \dots, I_n; B_1, \dots , B_n) \cap  \mathcal{J}(I_1. \dots, I_n)\right) \notag \\
& \leq  \frac{1}{\alpha^n}  \sum_{x_1 \in B_1} \dots \sum_{x_n \in B_n} 
  \mathbb{E}\left[ \left| \det \left( |\psi_{j}(x_k)|^2 \right)  \right|  \, 
    1_{ \mathcal{J}(I_1. \dots, I_n) }  \right] \notag \\ 
& \leq \alpha^{-n} \, n! \, \varrho_\infty^n \,  \prod_{j=1}^n |I_j| \, |B_j| 
\end{align}
where the last inequality is due to \eqref{eq:Detbound}.
\end{proof}   


\vskip 2cm 

\appendix

\noindent {\Large \bf  Appendix} \mbox{ } \\ 

\section{Counting solutions of a system of polynomial equations} \label{App:counting}

In the proof of Lemma~\ref{lem:specav},  for the bound \eqref{n!} we invoked the following general result.
\begin{proposition}[Special case of a theorem by D. Bernstein] \label{lem:bezout}
Let $P_J$, $j=1,..., n$ be polynomials in $n$ variables,  
$ \boldsymbol{\sigma} =  (\sigma_1 ,\cdots , \sigma_n) \in \mathbb{C}^n $,  which are linear in each variable, i.e., 
are of the form
\begin{equation}\label{eq:bernsteinpol}
  P_j(\boldsymbol{\sigma}) = \sum_{\boldsymbol{k} \in \{0,1\}^n } c_j(\boldsymbol{k}) \; \sigma_1^{k_1} \cdots \sigma_k^{k_n}\, , 
\end{equation}
with $ c_j(\boldsymbol{k}) \in \mathbb{C} $ which are non-zero only if $k_m= 0,1$. Then the number of isolated solutions of the system
\begin{equation}
  P_j(\boldsymbol{\sigma}) = 0 \, , \qquad \mbox{for all $ j \in \{ 1, \dots , n \}$,}
\end{equation} 
is at most $ n! $. Moreover, each solution $ \boldsymbol{\sigma} $  at which
\begin{equation}\label{eq:detcond}
  \det\left( \frac{\partial P_j(\boldsymbol{\sigma})}{\partial \sigma_k} \right)_{j,k=1}^n \neq 0 \, ,
\end{equation} 
is isolated.
\end{proposition}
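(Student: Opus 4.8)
The plan is to deduce the bound from Bernstein's theorem (the Bernstein--Kushnirenko--Khovanskii estimate) by computing the relevant mixed volume. Recall that for $n$ Laurent polynomials in $n$ variables the number of isolated common zeros in the torus $(\C\setminus\{0\})^n$, counted with multiplicity, is at most the normalized mixed volume $\mathrm{MV}(\Delta_1,\dots,\Delta_n)$ of the Newton polytopes $\Delta_j=\mathrm{conv}\{\boldsymbol{k}:c_j(\boldsymbol{k})\neq 0\}$, with the normalization chosen so that $\mathrm{MV}(K,\dots,K)=n!\,\mathrm{vol}_n(K)$. The multilinearity hypothesis \eqref{eq:bernsteinpol} forces every exponent vector to lie in $\{0,1\}^n$, so each $\Delta_j$ is contained in the unit cube $Q:=[0,1]^n$. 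Since the mixed volume is monotone under inclusion of convex bodies, this gives $\mathrm{MV}(\Delta_1,\dots,\Delta_n)\le \mathrm{MV}(Q,\dots,Q)=n!\,\mathrm{vol}_n(Q)=n!$, which is the asserted count for the solutions in the torus. (As a sanity check, this normalization reproduces Bézout's $d^n$ when each $\Delta_j=d\cdot\mathrm{conv}\{0,e_1,\dots,e_n\}$.)

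The remaining task, and the step I expect to require the most care, is that Bernstein's theorem as stated counts only solutions with all coordinates nonzero, whereas the proposition concerns all isolated solutions in $\C^n$, including those on the coordinate hyperplanes. I would handle this by passing to the smooth projective toric compactification $(\mathbb{P}^1)^n$, of which $\C^n=(\mathbb{A}^1)^n$ is the dense affine chart. A polynomial affine in each variable is precisely a global section of the line bundle $\mathcal{O}(1,\dots,1)$ on $(\mathbb{P}^1)^n$, so by the refined Bézout theorem the number of isolated points in the common zero locus of $n$ such sections is at most the top self-intersection number $(H_1+\dots+H_n)^n$, where $H_i$ denotes the pullback of the point class $\{\mathrm{pt}\}$ from the $i$-th factor. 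Using $H_i^2=0$ and $H_1\cdots H_n=1$, only the fully mixed term survives and its multinomial coefficient is $n!$, so $(H_1+\dots+H_n)^n=n!$. Since every isolated solution in $\C^n$ is in particular an isolated point of this zero locus, the bound $n!$ holds for all solutions and not merely the torus ones. (Equivalently, one may invoke the affine form of Bernstein's bound with each $\Delta_j$ replaced by $\mathrm{conv}(\Delta_j\cup\{0\})$; because $Q$ already contains the origin, this augmentation leaves the polytopes inside $Q$ and the same monotonicity estimate applies.)

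Finally, the \emph{moreover} assertion is the easy part: if $\det\left(\partial P_j/\partial\sigma_k\right)_{j,k=1}^n\neq 0$ at a common zero $\boldsymbol{\sigma}$, then the holomorphic map $\boldsymbol{\sigma}\mapsto(P_1(\boldsymbol{\sigma}),\dots,P_n(\boldsymbol{\sigma}))$ has nonsingular Jacobian there, so by the inverse function theorem it is a local biholomorphism; hence $\boldsymbol{\sigma}$ is the unique preimage of $0$ in a neighborhood and is therefore isolated. The genuine obstacle is thus the treatment of the non-torus solutions described above, since the mixed-volume computation itself becomes routine once the Newton polytopes are recognized to sit inside the unit cube.
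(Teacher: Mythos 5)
Your proof is correct, and it is instructive to compare it with the paper's, which is far terser. For the counting bound the paper simply cites Bernstein's theorem in the form given in \cite[Thm.~5.4]{CLS91}; your mixed-volume computation (each Newton polytope sits in the unit cube $Q$, monotonicity of mixed volumes, $\mathrm{MV}(Q,\dots,Q)=n!\,\mathrm{vol}_n(Q)=n!$) is precisely the calculation that turns that citation into the bound $n!$, so on the torus the two arguments coincide. Where you go beyond the paper is in flagging that Bernstein's theorem, as usually stated and as cited, counts only zeros in $(\mathbb{C}\setminus\{0\})^n$, whereas the proposition --- and its application in the spectral-averaging lemma, where the variables are potential values $V_{x_j}$ that may well vanish --- concerns all isolated zeros in $\mathbb{C}^n$. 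Your passage to $(\mathbb{P}^1)^n$ with the multihomogeneous B\'ezout bound $(H_1+\cdots+H_n)^n=n!$ (or, alternatively, the affine BKK bound with the polytopes $\mathrm{conv}(\Delta_j\cup\{0\})$) closes a genuine gap that the paper's proof-by-citation passes over silently; this step does rest on a further citation (non-negativity of the contributions of positive-dimensional components for nef classes, as in Fulton's refined intersection theory), but that is a standard and correct invocation, on the same footing as the paper's own reliance on \cite{CLS91}. For the ``moreover'' part the two proofs genuinely diverge: the paper argues that a non-isolated solution would lie on a differentiable curve of solutions, whose tangent vector would then be a null vector of the Jacobian matrix, contradicting \eqref{eq:detcond}; your inverse-function-theorem argument reaches the same conclusion more directly and avoids having to justify the existence of such a curve, which the paper asserts without proof and which would itself require something like a curve-selection lemma. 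In short, your route is the same in spirit --- it is Bernstein's bound either way --- but it is more complete exactly at the two points where the paper is sketchy.
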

 
The first part is a special case of \cite[Thm.~5.4]{CLS91}.
In case the solution is not isolated, there exists locally a differentiable curve $ s \mapsto   \boldsymbol{\sigma}(s) $ such that for all $  j = 1, \dots , n $
\begin{equation}
  0 = \frac{d P_j(\boldsymbol{\sigma}(s)) }{ d s } = \sum_{k=1}^n  \frac{\partial P_j(\boldsymbol{\sigma}(s))}{\partial \sigma_k} \frac{ d \sigma_k(s) }{d s } \, .
\end{equation}
This contradicts the assumption \eqref{eq:detcond}, which implies that the matrix of partial derivatives has no zero eigenvalue.

\section{Simplicity of the spectrum}  

In our discussion it was convenient to know that the spectrum of an operator with random potential is almost surely non-degenerate.  While this assertion is among the consequence of the Minami bound, for completeness we present here also a direct and elementary proof.  

\begin{lemma}\label{lem:simplicity}  Let $ H_\Lambda(\omega)$ be an operator  in $\ell(\Lambda)^2$, for some finite region  $|\Lambda|$, with a random potential such that for each $x\in \Lambda$ the conditional distribution of $V_x(\omega)$, conditioned on $\{V_y(\omega)\}_{y\in \{ x\}^c}$, is almost surely continuous.  Then
for almost all $ \omega $ the spectrum of $ H_\Lambda(\omega) $ has only simple eigenvalues.
\end{lemma}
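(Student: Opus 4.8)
The plan is to reduce the statement to the non-vanishing of a single polynomial in the potential variables, and then to show that such a polynomial vanishes with probability zero under the sole assumption of non-atomic conditionals. Write $V=(V_x)_{x\in\Lambda}\in\R^{|\Lambda|}$ and let $p_V(E):=\det\left(H_\Lambda(V)-E\right)$ be the characteristic polynomial; its coefficients are polynomials in the matrix entries, hence polynomials in $V$. Since $H_\Lambda(V)$ is Hermitian it is diagonalizable with real spectrum, so an eigenvalue is degenerate if and only if it is a repeated root of $p_V$. Consequently the spectrum of $H_\Lambda(V)$ is simple if and only if $\Delta(V)\neq 0$, where $\Delta(V):=\mathrm{disc}_E\, p_V$ is the discriminant of $p_V$ with respect to $E$. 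As the discriminant is a universal polynomial in the coefficients of $p_V$, the function $\Delta$ is itself a polynomial in $V$, and the degenerate configurations form exactly its zero set.

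First I would check that $\Delta$ is not the zero polynomial, for which it suffices to exhibit one $V$ at which $H_\Lambda$ has simple spectrum. Fixing a diagonal matrix $D=\mathrm{diag}(d_x)_{x\in\Lambda}$ with pairwise distinct entries and taking $V=sD$, Weyl's inequalities show that for large $s$ the eigenvalues of $T+sD$ lie within $O(1)$ of the numbers $s\,d_x$; since the gaps $s|d_x-d_y|$ diverge, the eigenvalues are distinct for all large $s$. Hence $\Delta(sD)\neq 0$ for such $s$, and therefore $\Delta\not\equiv 0$.

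The heart of the argument is then the following probabilistic statement, which I would prove by induction on $|\Lambda|$: if $Q$ is a polynomial in the variables $(V_x)_{x\in\Lambda}$ that is not identically zero, and the conditional law of each $V_x$ given $\{V_y\}_{y\neq x}$ is a.s. non-atomic, then $\mathbb{P}\left(Q(V)=0\right)=0$. For the inductive step I would single out one variable $V_{x_0}$ and write $Q=\sum_k a_k\, V_{x_0}^k$ with coefficients $a_k$ depending on $\{V_y\}_{y\neq x_0}$; let $k^\ast$ be the largest index with $a_{k^\ast}\not\equiv 0$. On the event $\{a_{k^\ast}\neq 0\}$, for fixed values of the other variables the map $V_{x_0}\mapsto Q(V)$ is a nonzero polynomial (of degree $k^\ast\geq 1$, or constant and nonzero if $k^\ast=0$), hence has finitely many roots, so non-atomicity of the conditional law of $V_{x_0}$ forces $Q(V)\neq 0$ almost surely there. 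On the complementary event $\{a_{k^\ast}=0\}$ the induction hypothesis applied to the nonzero polynomial $a_{k^\ast}$ in the remaining variables shows that this event itself has probability zero. The one point to verify is that the non-atomicity hypothesis is hereditary: the conditional law of $V_y$ given a sub-collection of the remaining variables is a mixture of the a.s. non-atomic laws conditioned on all others, and a mixture of non-atomic measures is non-atomic. Applying the statement to $Q=\Delta$ gives $\mathbb{P}\left(\Delta(V)=0\right)=0$, which is the assertion.

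The main obstacle is precisely this last, purely probabilistic step. Because the hypothesis provides only non-atomic — rather than absolutely continuous — conditionals, one cannot simply invoke the fact that a proper algebraic variety has Lebesgue measure zero, and a single conditioning on one $V_{x_0}$ fails to control configurations in which the $V_{x_0}$-independent part of $H_\Lambda$ is already degenerate. The induction above is what circumvents this difficulty, at the cost of the elementary bookkeeping on leading coefficients together with the heredity of non-atomicity.
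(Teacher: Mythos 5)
Your proof is correct, but it is organized quite differently from the paper's. Both arguments ultimately rest on the same two pillars---degeneracy is the zero set of a polynomial in the potential variables, and a non-atomic conditional law puts no mass on the finitely many roots of a nonzero one-variable polynomial---and in fact your discriminant $\Delta(V)$ coincides with the paper's $\det M_\Lambda$, since both equal $\prod_{j<k}(E_j-E_k)^2$; the paper merely realizes this quantity spectrally, as the determinant of $M_\Lambda=(H_\Lambda\otimes 1 - 1\otimes H_\Lambda)^2$ restricted to the antisymmetric subspace. The real difference is in how the degenerate case is handled. You prove once and for all that $\Delta\not\equiv 0$ (via the strongly split diagonal potential $sD$ and Weyl's inequality) and then invoke a general, self-contained lemma: a not-identically-zero polynomial in variables whose conditional laws are a.s.\ non-atomic vanishes with probability zero, proved by induction on the number of variables, with heredity of non-atomicity under marginalization (mixtures of non-atomic measures are non-atomic) feeding the inductive hypothesis. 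The paper never establishes global non-vanishing of its polynomial; instead, for fixed $\{V_y\}_{y\neq x}$ it uses the dichotomy that $V_x \mapsto \det M_\Lambda$ either has finitely many zeros or is constant in $V_x$, and in the constant case sends $V_x\to\infty$ to conclude $\det M_{\Lambda\setminus\{x\}}=0$, recursing on $|\Lambda|$ down to the trivial case $|\Lambda|=1$. What your route buys is modularity: the polynomial-vanishing lemma is reusable beyond this setting, and your heredity observation is the one genuinely delicate point, which you resolve correctly (making the ``mixture'' identification rigorous requires regular conditional distributions, available here since everything lives on $\R^{|\Lambda|}$). What the paper's route buys is economy of means: no discriminant theory, no Weyl inequality, and no need to exhibit a witness of simple spectrum, since the limit $V_x\to\infty$ performs that reduction implicitly.
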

\begin{proof}[Proof of Lemma~\ref{lem:simplicity}]
Let $ \psi_1 , \dots, \psi_{|\Lambda|} $ be an orthonormal basis of eigenfunctions of $ H_\Lambda $ with corresponding eigenvalues denoted by $ E_1, \dots, E_{|\Lambda|} $.  
Consider the self adjoint operator 
$$  M_\Lambda := ( H_\Lambda \otimes 1 - 1 \otimes H_\Lambda )^2 
$$ 
on the subspace $ \mathcal{H}^-$ of antisymmetric functions within the product space $ \ell^2(\Lambda) \otimes  \ell^2(\Lambda) $. It is straightforward to check that  the orthonormal basis given by 
\begin{equation}
\Psi_{jk}^- := \frac{1}{\sqrt{2}} \left( \psi_j \otimes \psi_k - \psi_k \otimes \psi_j \right) \, , \qquad  j < k \, ,
\end{equation}
constitutes an eigenbasis with $
M_\Lambda \, \Psi_{jk}^- = (E_j - E_k)^2 \,   \Psi_{jk}^- $.
The simplicity of the spectrum of $ H_\Lambda $ is therefore equivalent to 
 $ M_\Lambda $ being almost surely invertible on $ \mathcal{H}^-$, i.e.,
\begin{equation}\label{eq:detnull}
\det M_\Lambda > 0 \, . 
\end{equation}
For a proof of this assertion, we consider the $ \left( |\Lambda| \atop 2 \right)^2 $ matrix-elements given by
\begin{equation} 
  \langle \delta_{x'y'}^- , M_\Lambda \delta_{xy}^- \rangle, \qquad  \delta_{xy}^- := \frac{1}{\sqrt{2}} \left( \delta_x \otimes \delta_y - \delta_y \otimes \delta_x \right)
\end{equation} 
associated
with the antisymmetrized position basis of $ \mathcal{H}^- $, and study their 
dependence on a single random variable, say $ V_x $. 
Only $ |\Lambda |-1 $ rows (and columns) of the matrix depend on $ V_x $.
In these rows, the diagonal matrix elements, with $ x = x' $ and $ y = y' $, depend on $ V_x $ quadratically, while the diagonal elements are  linear in $V_x$.

Hence, $ V_x \mapsto \det M_\Lambda $ is a polynomial of degree at most $ |\Lambda| $.  Thus, we have the following dichotomy:  the characteristic polynomial has either at most $  |\Lambda| $ isolated zeros or is independent of $ V_x $.  It the first case, the conditional probability
that $ \det M_\Lambda = 0 $, conditioned on $ \{ V_y \}_{y \neq x } $, vanishes, since the distribution of $V_x$ is assumed to be continuous with respect to Lebesgue measure.
In the second case, one may reduce the site $ x $ from $ \Lambda $ by taking the limit $ V_x \to \infty $.  In this limit 
$ H_\Lambda \to  H_{\Lambda \backslash \{x\} }\oplus \infty $. Since $  \det M_\Lambda $ does not diverge in this limit, one may conclude that $ \det M_{\Lambda \backslash \{0 \} } = 0 $, 
and the argument may be  repeated  for the smaller set.   In case $ |\Lambda | = 1 $ the condition \eqref{eq:detnull} is trivially satisfied.   This completes the proof.
\end{proof}

\section*{Acknowledgement}
We  thank Shmuel Fishman, Avi Soffer, and Yevgeny Krivolapov for discussions of their work which have stimulated our interest in the topic presented here, and the Weizmann Institute Center for Complex Systems for its hospitality, on a visit for which partial travel support was received from the BSF grant 710021. 
The work was supported in parts by the NSF grants DMS-0602360 (MA) and DMS-0701181 (SW).


\bibliographystyle{plain}

\end{document}